\documentclass[11pt]{article}             % Use with LaTeX2e
\usepackage{osid}                         %

\usepackage{amsmath,amssymb,amscd,amsthm,mathrsfs}
\usepackage{bbm}
\usepackage{comment} 
\usepackage{enumerate} 
\usepackage{color}
\usepackage{url}

%\usepackage[bookmarks=false,pdfstartview={FitH}]{hyperref}

%\renewcommand{\thefootnote}{\arabic{section}\alph{footnote}} %Footnotes are now of the form '1a' so there is no overlap
%LATER setcounter manually in each section

%\counterwithout{footnote}{section}  %Just so footnote counter does not reset within every section

\theoremstyle{plain}% Theorem-like structures provided by amsthm.sty
\newtheorem{lemma}{Lemma}
\newtheorem{thm}{Theorem}

\newtheorem{corollary}{Corollary}
\newtheorem{proposition}{Proposition}

\theoremstyle{definition}

\newtheorem{remark}{Remark}

%%%%%%%%%%%%%%%%%%%% Appendix

\newcounter{app}

\title{Almost All Quantum Channels Are Diagonalizable}
\author{Frederik vom Ende%\thanks{Supported by ...} 
	\\[1mm]{\footnotesize\it Dahlem Center for Complex Quantum
Systems, Freie Universität Berlin, Arnimallee 14, 14195 Berlin, Germany
 \& {frederik.vom.ende@fu-berlin.de}}\\[2ex]
}

\AtBeginDocument{
  \label{CorrectFirstPageLabel}
  
}

\begin{document}

\maketitle
\begin{abstract}
We prove the statement \textit{``The collection of all elements of $\mathcal S$ which have only simple eigenvalues is 
%norm-
dense in $\mathcal S$''} for different sets $\mathcal S$, including:
all quantum channels, the unital channels, the positive trace-preserving maps, all Lindbladians (\textsc{gksl}-generators), and all time-dependent Markovian channels.
Therefore any element from each of these sets can always be approximated by diagonalizable elements of the same set to arbitrary precision.
\end{abstract}

\section{Introduction}
The concept of diagonalization is ubiquitous in quantum physics: it comes up for Hamiltonians (e.g., to find ground-state properties of many-body systems),
%\cite{Laflorencie16} 
quantum channels (e.g., to determine fixed points and find noiseless codes \cite{BNPV10}), the Choi matrix (for obtaining Kraus operators 
%of a channel 
\cite{Watrous18}), and generators of open quantum dynamics (e.g., to find steady states and simplify computation and simulation, etc.~\cite{Albert16}).
What is not as widely known, however, is that quantum channels and Lindbladians can be non-diagonalizable is the sense that their representation matrix cannot always be written as $SDS^{-1}$ for $D$ diagonal and $S$ invertible\footnote{
To prevent potential misunderstandings (as has happened in the past \cite{SL05,SL06,NAJ10,Kosloff21})
%\cite{Kosloff21} (their claim should be wrong, cf. Eq.(A8)),
let us explicitly point out that we are really talking about diagonalizability via similarity transformations and \textit{not} about the---strictly stronger---notion of \textit{unitary} diagonalizability (which is equivalent to normality, i.e.~$\Phi\Phi^\dagger=\Phi^\dagger\Phi$ \cite[Thm.~2.5.4]{HJ1}).
}.
%intuition: Choi always diag because psd (Hermitian)
Lack of diagonalizability can be problematic as it can cause undesirable numerical behavior or complicate potential proof strategies
%indeed, there are a number of papers where things are proven for the special case of diagonalizable maps, cf.~
\cite{Cubitt12,LKM17,CC19,DZP19,CC21,Chrus22}.
%further lit on (non-)diagonalizable GKSL generators:  \cite{DWCM04,Albert14,Kosloff21_2,Samach22} %Samach22: unitarily diag
%, arXiv:2111.12219
A simple example of a non-diagonalizable channel is the qubit channel $\Phi$ defined via
$$
\Phi(\rho):=\frac12\begin{pmatrix}
\rho_{11}+\rho_{22}&-\rho_{11}+\rho_{22}\\
-\rho_{11}+\rho_{22}&\rho_{11}+\rho_{22}
\end{pmatrix}\,,
$$
cf.~\cite{Burgarth13}. Indeed, its Pauli transfer matrix $\mathsf P(\Phi):=(\frac12{\rm tr}(\sigma_j\Phi(\sigma_k)))_{j,k=0}^3$ reads
\begin{equation}\label{eq:Paulitransfer_nondiag}
\mathsf P(\Phi)=\begin{pmatrix}
 1 & 0 & 0 & 0 \\
 0 & 0 & 0 & -1 \\
 0 & 0 & 0 & 0 \\
 0 & 0 & 0 & 0
\end{pmatrix}
\end{equation}
which clearly features a non-trivial Jordan block meaning it cannot be diagonalized.
However, it is also known that in our context not all eigenvalues can admit such Jordan blocks. For example, this is true for the eigenvalue $0$ of a Lindbladian which is always semisimple \cite[App.~A]{VALZ16};
actually, this is true for all peripheral eigenvalues (i.e.~all eigenvalues $\lambda$ with $|\lambda|=1$) of any positive trace-preserving map \cite[Prop.~6.2]{Wolf_Course2012}.
For more examples---also of non-diagonalizable stochastic matrices---refer to, e.g., \cite[Ex.~6]{Chrus22} \& \cite{Lendi87,Wolf_Course2012,PG22}.

A well-known, quite useful property of diagonalizable matrices is, of course, that they are generic, i.e.~every matrix can be approximated by diagonalizable matrices arbitrarily well.
%this property cannot hold for unitary diagonalizability as the set of normal matrices is closed\footnote{
%If $\{A_j\}_{j\in\mathbb N}$ is a sequence of normal matrices with limit $A$, then $(AA^*-A^*A)=\lim_{j\to\infty}(A_jA_j^*-A_j^*A_j)=\lim_{j\to\infty}0=0$ meaning $A$ is normal, as well.
%}
%so the set of normal \textit{channels} is compact as it is the intersection of a closed (normal matrices) and a compact (the quantum channels) set.
%In particular this shows that normal channels cannot approximate anything non-normal; equivalently, all channels sufficiently close to a non-normal (e.g., non-diagonalizable) channel are necessarily non-normal, as well.
However, while this implies that every quantum channel can be approximated by diagonalizable linear maps, it is not clear whether it can be approximated via diagonalizable \textit{channels}.
%naive approach: ``diagonalizable matrices are dense so one can always approximate ...'' (cf., e.g., \cite{Cubitt12}). However, ......
To assure ourselves that not every set of matrices needs to have such an approximation property consider the (obvious) counterexample $\mathcal S:={\rm span}\{|0\rangle\langle 1|\}$ \cite{Hartfiel95}: Trivially, no non-zero element of $\mathcal S$ can be approximated by diagonalizable elements of $\mathcal S$.
Thus it is not immediately clear under what condition a strict subset
%$\mathcal S$ 
of all matrices (such as, e.g., the quantum channels) has the desired approximation property, and while one certainly expects this to be true for channels and related sets it is not obvious how one would prove such a thing.

The purpose of this work is to settle this question via basic tools from perturbation theory (which we review in Sec.~\ref{sec_prelim}).
This will allow us to prove the even stronger statement that almost all channels have only simple eigenvalues\footnote{
Recall that an eigenvalue is called simple if its algebraic multiplicity is $1$.
}: indeed, if all eigenvalues of a matrix are simple, then  they are necessarily distinct which is well known to imply diagonalizability \cite[Thm.~1.3.9]{HJ1}.
With this our main result can be summarized as follows.\medskip

\noindent\textbf{Theorem (Informal).} \textit{Let $\mathcal S$ be one of the following sets: all quantum channels, unital channels, positive trace-preserving maps, all Lindbladians (\textsc{gksl}-generators), or the time-dependent Markovian channels.
Then those elements of $\mathcal S$ which have only simple eigenvalues are norm-dense in $\mathcal S$.}\medskip

In fact, we will prove this result for more general sets $\mathcal S$ but to keep things simple here we refer to Sec.~\ref{sec_mainres} (in particular the paragraphs surrounding the therorems therein) for the precise statements.

\section{Preliminaries}\label{sec_prelim}

First some notation:
given any vector space $\mathcal V$,
the collection of all linear maps $:\mathcal V\to\mathcal V$ will be denoted by $\mathcal L(\mathcal V)$.
If $\mathcal V=\mathbb C^{n\times n}$, then $\mathsf P(n)$ ($\mathsf{PTP}(n)$) denotes the subset of all positive (and trace-preserving) maps;
similarly, $\mathsf{CP}(n)$ ($\mathsf{CPTP}(n)$) is the collection of all completely positive (and trace-preserving) maps.
Recalling the trace norm $\|\cdot\|_1$ which is the sum of all singular values of the input---as well as its ``dual'' norm, i.e.~the operator norm $\|\cdot\|_\infty$ (of matrices) which is given by the largest singular value of the input---one defines 
the operator norm on $\mathcal L(\mathbb C^{n\times n})$ with respect to the trace norm
%will be denoted $\|\cdot\|_{1\to 1}$, i.e.~
via $\|\Phi\|_{1\to 1}:=\sup_{A\in\mathbb C^{n\times n},\|A\|_1=1}\|\Phi(A)\|_1$.
In contrast, the symbol for the diamond norm (completely bounded trace norm) will be $\|\cdot\|_\diamond$, i.e.~$\|\Phi\|_\diamond:=\|\Phi\otimes{\rm id}_n\|_{1\to 1}$ \cite{Watrous18}.
Finally, a well-known yet important fact for our purpose is that in finite dimensions all norms are equivalent \cite[Coro.~5.4.5]{HJ1}; thus whenever we write $\overline{(\cdot)}$---which shall denote the norm closure of a set---we do not have to specify which norm we use.
%To test complete positivity, one usually employs the \textit{Choi matrix}: as proven in the seminal paper of Choi \cite{Choi75} a map $\Phi\in\mathcal L(\mathbb C^{n\times n})$ is completely positive if and only if 
%$\mathsf C(\Phi):=({\rm idZY}\otimes\Phi)(|\Gamma\rangle\langle\Gamma|)$ is positive semi-definite, where $|\Gamma\rangle:=\sum_{j=1}^n|j\rangle\otimes|j\rangle$ is the (unnormalized) maximally entangled state.

Before stating our first lemma we quickly recall the notion of analyticity
\cite[Ch.~8]{Rudin76}: A function $f:I\to\mathbb R$ on an interval $I\subseteq\mathbb R$ is called real analytic if for every $t_0\in I$ there exists $\delta>0$ such that $f(t)=\sum_{n=0}^\infty\frac{f^{(n)}(t_0)}{n!}(t-t_0)^n$ for all $|t-t_0|<\delta$;
obviously, a real analytic function has a unique extension to a holomorphic function on a non-empty connected open set $D\subseteq\mathbb C$.
Moreover, this concept extends to more general co-domains by requiring that every component function of $f$ is real analytic.
Now, having set the stage let us explicitly state the following key result from perturbation theory \cite[Ch.~2, §1]{Kato80} which we will frequently use throughout this paper:
\begin{lemma}\label{lemma_kato_pert}
Let any finite-dimensional vector space $\mathcal V$, any non-empty connected open set $D\subseteq\mathbb C$, and any function $\gamma:D\to\mathcal L(\mathcal V)$ be given.
If $\gamma$ is holomorphic, then the number $s$ of distinct eigenvalues of $\gamma(t)$ is constant if $t\in D$ is not one of the exceptional points (in which case the number of distinct eigenvalues is smaller than $s$).
There are only finitely many exceptional points in each compact subset of $D$.
The number $s$ is equal to $\dim{\mathcal V}$ if and only if there exists $t\in D$ such that $\gamma(t)$ has $\dim{\mathcal V}$ distinct eigenvalues.
All of this continues to hold if $D\subseteq\mathbb R$ is an interval and $\gamma$ is real analytic.
\end{lemma}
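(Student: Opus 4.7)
The plan is to reduce the lemma to the identity theorem for holomorphic functions on a connected open set: any holomorphic $f:D\to\mathbb C$ either vanishes identically or has a zero set that is discrete in $D$, and in the latter case contains only finitely many points in any compact $K\subset D$.

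The bridge from ``counting distinct eigenvalues'' to a holomorphic problem is the classical Hankel-rank identity: for any $A\in\mathbb C^{n\times n}$ with distinct eigenvalues $\mu_1,\ldots,\mu_r$ of algebraic multiplicities $m_1,\ldots,m_r$, using $\operatorname{tr}(A^p)=\sum_{k=1}^r m_k\mu_k^{\,p}$ one finds that the matrix $H(A):=(\operatorname{tr}(A^{i+j}))_{i,j=0}^{n-1}$ factors as $V^{\top}\operatorname{diag}(m_1,\ldots,m_r)V$ with $V\in\mathbb C^{r\times n}$, $V_{ki}=\mu_k^{\,i}$. Since $V$ has full row rank $r$ whenever the $\mu_k$ are pairwise distinct and the diagonal factor is invertible, $\operatorname{rank}H(A)$ equals the number of distinct eigenvalues of $A$. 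Setting $n:=\dim\mathcal V$, the map $t\mapsto H(\gamma(t))$ is then holomorphic on $D$ because its entries are polynomials in the entries of $\gamma(t)$.

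Now let $s:=\max_{t\in D}\operatorname{rank}H(\gamma(t))$, a well-defined integer in $\{1,\ldots,n\}$. The exceptional set $E:=\{t\in D:\operatorname{rank}H(\gamma(t))<s\}$ is the common zero locus of all $s\times s$ minors of $H(\gamma(t))$, each of which is a holomorphic function of $t$. By the definition of $s$ at least one of these minors is nonzero at some $t_0\in D$, hence not identically zero on the connected open set $D$; the identity theorem then forces its zero set, and $E$ as a subset thereof, to be discrete in $D$ and finite on every compact $K\subset D$. On $D\setminus E$ the number of distinct eigenvalues is constantly $s$, while on $E$ it is strictly less, and the final assertion ``$s=\dim\mathcal V$ iff some $\gamma(t)$ has $\dim\mathcal V$ distinct eigenvalues'' is just the definition of $s$ as a maximum. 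For the real-analytic case one uses that $\gamma$ extends uniquely to a holomorphic map on some connected open complex neighborhood of the interval (by gluing local power-series extensions, agreement on overlaps being guaranteed by the identity theorem) and applies the preceding step to that extension.

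The only non-formal ingredient is the Hankel-rank identity, which I would cite rather than reprove since it is classical. A technically equivalent alternative is to use the discriminant and, more generally, the subresultants of the characteristic polynomial $\det(\lambda I-\gamma(t))$ in $\lambda$ as the holomorphic detectors of eigenvalue coalescence; both routes lean on exactly the same identity-theorem conclusion, but the Hankel presentation makes the statement ``rank $\geq k$ is a holomorphic open condition'' immediate and keeps the argument self-contained.
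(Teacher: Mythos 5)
Your proposal is correct. Note, however, that the paper does not prove this lemma at all: it is quoted as a known result from perturbation theory with a citation to Kato (Ch.~2, \S 1), where the argument runs through the characteristic polynomial, its discriminant, and the theory of algebraic functions. Your route via the Hermite/Hankel trace form $H(A)=(\operatorname{tr}(A^{i+j}))_{i,j=0}^{n-1}$ is a legitimate, essentially self-contained alternative: the identity $\operatorname{rank}H(A)=\#\{\text{distinct eigenvalues}\}$ turns ``eigenvalue coalescence'' into ``all $s\times s$ minors of a holomorphic matrix family vanish,'' after which the identity theorem on the connected set $D$ does all the work, exactly as in Kato's discriminant argument (which you correctly identify as the technically equivalent alternative). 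One small point worth making explicit if you write this up: over $\mathbb C$ the transpose (not conjugate-transpose) factorization $H=V^{\top}MV$ does not by itself give $\operatorname{rank}H=\operatorname{rank}V$ (e.g.\ $w^{\top}w$ can vanish for $w\neq 0$); what saves you is that $V^{\top}$ is injective on $\mathbb C^{r}$ and $M$ is invertible, so $\ker(V^{\top}MV)=\ker V$ and hence $\operatorname{rank}H=r$. Your treatment of the real-analytic case (holomorphic extension to a connected complex neighborhood of the interval, with the generic rank on the interval agreeing with the generic rank of the extension because the exceptional set is discrete) is also sound.
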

\noindent As a side note such exceptional points play an important role in the non-Hermitian Hamiltonian-approach to open quantum systems \cite{MR08,Rotter09,Heiss12,MMCN19}.\medskip
%
%non-diag/singularities/exceptional points important in non-Hermitian quantum systems/PT symmetry: \url{https://doi.org/10.1364/PRJ.396115} \& \url{https://arxiv.org/search/?searchtype=author&query=Starkov%2C+G+A
%}

The first application of this lemma will be a slight generalization of a result first stated by Hartfiel \cite[Coro.~1]{Hartfiel95}.
It will be the key to our main theorems.
While the paper of Hartfiel has been cited in the quantum information literature before \cite{HROWE22,EKC23} it seems that the connection to channel diagonalizability has not been drawn yet.
%recall: eigenvalue is called simple if it has algebraic multiplicity one; thus all ev simple iff all ev distinct
\begin{lemma}\label{lemma_hartfiel}
Given any finite-dimensional vector space $\mathcal V$ and any subset $\mathcal C\subseteq\mathcal L(\mathcal V)$ which is analytically path connected (i.e.~for all $X_1,X_2\in\mathcal C$ there exists a real-analytic function $\gamma:[0,1]\to\mathcal C$ such that $\gamma(0)=X_1$, $\gamma(1)=X_2$) the following statements are equivalent.
\begin{itemize}
\item[(i)] There exists $Z\in\mathcal C$ such that $Z$ has only simple eigenvalues.
\item[(ii)] $
\overline{\{Y\in\mathcal C:Y\text{ has only simple eigenvalues}\}}=\mathcal C
$
\end{itemize}
%Here the closure $\overline{(\cdot)}$ can correspond to any norm on $\mathcal L(\mathcal V)$.
If either of the above statements is true, then $
\overline{\{Y\in\mathcal C:Y\text{ diagonalizable}\}}=\mathcal C$.
 \end{lemma}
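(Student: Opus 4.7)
The plan is to exploit Lemma \ref{lemma_kato_pert} along an analytic path. The implication (ii) $\Rightarrow$ (i) is immediate: analytic path-connectedness of $\mathcal C$ implies that $\mathcal C$ is non-empty, and the closure of a subset that equals such a non-empty $\mathcal C$ forces that subset to be non-empty as well. All substantive content therefore lies in the converse and in the final diagonalizability statement.

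For (i) $\Rightarrow$ (ii) I would fix an arbitrary target $X\in\mathcal C$ together with the distinguished $Z\in\mathcal C$ whose spectrum is simple, so that $Z$ has $n:=\dim\mathcal V$ pairwise distinct eigenvalues. By hypothesis there exists a real-analytic curve $\gamma:[0,1]\to\mathcal C$ with $\gamma(0)=X$ and $\gamma(1)=Z$. Real analyticity on the closed interval means (by definition) real analyticity on some open neighborhood of $[0,1]$, so Lemma \ref{lemma_kato_pert} applies on such an interval. Because $\gamma(1)=Z$ already has $n$ distinct eigenvalues, the last sentence of that lemma forces the generic number $s$ to equal $n$, and consequently $\gamma(t)$ has $n$ distinct (equivalently simple) eigenvalues for every $t$ in the domain outside a discrete set of exceptional points, of which only finitely many lie in $[0,1]$. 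Picking any sequence $t_k\downarrow 0$ avoiding the exceptional points then yields $\gamma(t_k)\in\mathcal C$ with only simple eigenvalues and $\gamma(t_k)\to X$ by continuity, so $X$ lies in the claimed closure, proving (ii).

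For the final bonus statement I would simply remark that a matrix with only simple eigenvalues has $n$ pairwise distinct eigenvalues and is therefore diagonalizable (e.g.~\cite[Thm.~1.3.9]{HJ1}); hence the set of simple-spectrum elements of $\mathcal C$ is contained in the set of diagonalizable elements of $\mathcal C$, which in turn sits inside $\mathcal C$. Taking closures and invoking (ii) squeezes the diagonalizable set to be dense in $\mathcal C$ as well.

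The only genuine subtlety, and it is essentially bookkeeping, is making sure that Lemma \ref{lemma_kato_pert} may be applied to a path defined on the closed interval $[0,1]$; this is handled by the standard convention that real analyticity on a closed interval is defined via an open neighborhood. Beyond that the argument is a direct application of the Kato-perturbation input combined with the analytic path-connectedness hypothesis, with no further obstacles.
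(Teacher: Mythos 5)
Your proposal is correct and follows essentially the same route as the paper: run the real-analytic path from $X$ to $Z$, invoke the perturbation lemma to conclude that all but finitely many points on the path have simple spectrum, and approach $X$ along such points; the reverse implication and the diagonalizability statement are the same easy observations the paper leaves implicit. No gaps.
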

\begin{proof}
While the proof idea is the same as in \cite{Hartfiel95} let us sketch it nonetheless:
% for the sake of completeness:
%All we have to show is that every element of $\mathcal C$ can be approximated arbitrarily well by elements of $\mathcal C$ with distinct eigenvalues.
%For this let 
Given any $X\in\mathcal C$,
%be given and 
by assumption there exists \mbox{$\gamma:[0,1]\to\mathcal C$} real analytic such that $\gamma(0)=X$, $\gamma(1)=Z$.
Now Lemma~\ref{lemma_kato_pert} shows that%
%In particular $\gamma$ can be extended to a holomorphic function on some open set in $\mathbb C$ around $[0,1]$
%so by
%Because $\Phi(\varepsilon)$ is obviously holomorphic, b
%\cite[Ch.~2, §1]{Kato80}
%the number of distinct eigenvalues of $\gamma(t)$ is constant almost everywhere;
%a constant aside from the exceptional points (out of which there only finitely many in $[0,1]$ due to compactness)
%indeed, this constant equals $\dim{\mathcal V}$ 
---because $\gamma(1)=Z$ has $\dim{\mathcal V}$ distinct eigenvalues by assumption---all eigenvalues of $\gamma(t)$ are simple for all but finitely many $t\in[0,1]$ (due to $[0,1]$ being compact).
In particular this means that $X=\gamma(0)$ can be approximated by elements of $\mathcal C\supseteq\{\gamma(t):t\in(0,1]\}$ which have only simple eigenvalues.
%The final claim then follows from the well-known fact that all eigenvalues being distinct implies diagonalizability \cite[Thm.~1.3.9]{HJ1}.
\end{proof}
%In particular, there exists $N\in\mathbb N$ such that $\{X(\frac1n)\}_{n\geq N}\subseteq \{Y\in\mathcal C:Y\text{ diagonalizable}\}$.
%This sequence does the job as for any norm $\|\cdot\|$ on $\mathcal L(\mathcal V)$
%one finds $
%\lim_{n\to\infty}\|X-X(\tfrac1n)\|=\lim_{n\to\infty}\tfrac1n\| X-Z \|=0\,.
%$
%
%}\end{proof}
\noindent
The advantage of this more general formulation of the lemma is that it can be applied to convex sets (given $(1-t)X+t Z$ is certainly analytic in $t$) as well as to settings in differential geometry such as, e.g., Lie (semi)groups and general real-analytic manifolds.

We want to emphasize two things here: 1. For the above proof to work it is indeed necessary that $Z$ has only simple eigenvalues. 2. Just because the path $X(t)$ has only finitely many exceptional points that does not mean that the set of channels (which itself is compact) features only finitely or even countably many non-diagonalizable elements.
An example which substantiates both these claims is given by the convex hull of the unital reset channel and the non-diagonalizable channel from the introduction (Eq.~\eqref{eq:Paulitransfer_nondiag}) which is an uncountable set of non-diagonalizable qubit channels, cf.~Sec.~\ref{sec_example} below for more detail.
\section{Main Results}\label{sec_mainres}
%\subsection{Results on channels}
%Now Hartfiel's conclusion from this was that almost all (doubly) stochastic matrices have simple eigenvalues. 
Now on to our main results. 
The advantage 
%Because most sets of linear maps we are interested in are convex, by
Lemma~\ref{lemma_hartfiel} grants us is that it significantly reduces our main problem to the point where we ``only'' have to construct a single element with distinct eigenvalues to conclude that almost all elements from that set  have this property (assuming the set in question is analytically path connected).
One possible construction for the case of unital (i.e.~identity preserving) channels looks as follows:
\begin{lemma}\label{lemma_unital_nsquare}
For all $n\in\mathbb N$ there exists $\Psi\in\mathsf{CPTP}(n)$ such that $\Psi$ is unital and $\Psi$ has $n^2$ distinct eigenvalues.
\end{lemma}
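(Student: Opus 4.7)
The plan is to exhibit, for each $n$, one explicit unital element of $\mathsf{CPTP}(n)$ whose spectrum can be read off by inspection; the most convenient family is the generalised Pauli (Heisenberg-Weyl) channels. Concretely, I would introduce the clock and shift unitaries $X,Z\in\mathbb C^{n\times n}$ defined by $X|k\rangle=|k{+}1\bmod n\rangle$ and $Z|k\rangle=\omega^k|k\rangle$ with $\omega:=e^{2\pi i/n}$, together with the associated family $W_{ab}:=X^aZ^b$, $(a,b)\in\mathbb Z_n^2$. These are $n^2$ unitaries which are orthogonal in the Hilbert--Schmidt inner product and hence form a basis of $\mathbb C^{n\times n}$. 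For any probability vector $(p_{ab})$ the candidate map
$$
\Psi(\rho)\;:=\;\sum_{a,b=0}^{n-1} p_{ab}\,W_{ab}\rho W_{ab}^{\dagger}
$$
is automatically a unital member of $\mathsf{CPTP}(n)$, being a convex combination of unitary conjugations.

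The second step is a short spectral computation. From $ZX=\omega XZ$ one gets the conjugation rule $W_{ab}W_{cd}W_{ab}^{\dagger}=\omega^{bc-ad}W_{cd}$, so feeding $\rho=W_{cd}$ into $\Psi$ immediately shows that each $W_{cd}$ is an eigenvector of $\Psi$ with eigenvalue
$$
\lambda_{cd}\;=\;\sum_{a,b=0}^{n-1} p_{ab}\,\omega^{bc-ad}.
$$
Since the $\{W_{cd}\}$ span $\mathbb C^{n\times n}$, these are all $n^2$ eigenvalues of $\Psi$, and we recognise them as the values of the two-dimensional discrete Fourier transform of $p$ over $\mathbb Z_n^2$.

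The actual content of the lemma is now concentrated in the final step: choose $p$ so that the $n^2$ numbers $\lambda_{cd}$ are pairwise distinct. I would argue this by genericity. For each pair $(c,d)\neq(c',d')$ the condition $\lambda_{cd}=\lambda_{c'd'}$ is a single non-trivial complex-linear equation in the $p_{ab}$; the finite union of the resulting (real-codimension one) hyperplanes has empty interior in the probability simplex, so its complement is open and dense, and in particular non-empty. Any $p$ chosen from this complement makes the associated $\Psi$ the desired witness. If an explicit example is preferred, one may take a small generic perturbation of the uniform law $p_{ab}=1/n^2$ (where $\lambda_{00}=1$ and every other $\lambda_{cd}$ equals $0$) and check that the perturbation lifts the degeneracy; for $n=2$, for instance, the choice $(p_{00},p_{01},p_{10},p_{11})=(0.5,0.25,0.15,0.1)$ already yields the four distinct Fourier values $\{1,0.5,0.3,0.2\}$.

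The anticipated main obstacle is precisely this last step. The Heisenberg--Weyl diagonalisation is essentially a one-line calculation once the commutation relation is in hand, but one has to be careful to phrase the ``all Fourier values are distinct'' argument so that it genuinely lives inside the probability simplex (rather than in the ambient complex vector space of all functions $\mathbb Z_n^2\to\mathbb C$), and to verify that the finitely many forbidden hyperplanes really do have empty interior there. Once that is handled, combining the resulting $\Psi$ with Lemma~\ref{lemma_hartfiel} then yields the density results in the main theorems.
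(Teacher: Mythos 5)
Your proposal is correct, but it takes a genuinely different route from the paper. The paper builds a fully explicit, non-covariant channel on the matrix units: the diagonal sector is driven by the ``lazy cyclic shift'' $A=\tfrac12(\mathbf 1+C)$, each off-diagonal unit $|j\rangle\langle k|$ is scaled by $\pm i/(2^j3^k)$, and one must then verify trace preservation, unitality and (via the Choi matrix plus a $\|\mathbf 1-X\|_\infty\le 1$ criterion) complete positivity; distinctness of the $n^2$ eigenvalues rests on unique prime factorization for the off-diagonal eigenvalues and a disk argument separating them from the spectrum of $A$. Your Weyl-covariant mixed-unitary channel makes the structural checks free (CPTP and unitality are automatic for a convex combination of unitary conjugations), and the conjugation rule $W_{ab}W_{cd}W_{ab}^\dagger=\omega^{bc-ad}W_{cd}$ reduces the spectrum to the $\mathbb Z_n^2$-Fourier transform of $p$ in one line; the entire content then sits in choosing $p$ off finitely many forbidden affine subspaces of the simplex. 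That genericity step is sound as you state it: each functional $p\mapsto\lambda_{cd}(p)-\lambda_{c'd'}(p)$ is nonzero (evaluate at $(a,b)=(1,0)$ and $(0,1)$ to see $c=c'$, $d=d'$ would be forced), and a nonzero linear functional cannot vanish identically on the affine hull $\{\sum p_{ab}=1\}$ since that hull spans the ambient space; so the bad set is a finite union of proper affine slices with empty relative interior. The trade-off is explicitness: the paper's witness is written down in closed form for every $n$, whereas yours is explicit only up to the choice of $p$ (your $n=2$ instance works, and the values $\{1,0.5,0.3,0.2\}$ check out, but a closed-form $p$ for general $n$ would need extra work). Since the lemma is purely an existence statement feeding into Lemma~\ref{lemma_hartfiel}, either version suffices; yours has the minor bonus that the resulting $\Psi$ is even a normal superoperator, as the $W_{cd}$ form a Hilbert--Schmidt-orthogonal eigenbasis.
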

\begin{proof}
Consider the cyclic shift $C:=\sum_{j=1}^n|j\rangle\langle j+1|\in\mathbb C^{n\times n}$ ($|n+1\rangle:=|1\rangle$), the convex combination $A:=\frac12({\bf1}+C)$, and finally the induced linear map $\Psi\in\mathcal L(\mathbb C^{n\times n})$ defined via
$$
\Psi(|j\rangle\langle k|):=\begin{cases}
\sum_{l=1}^nA_{lj}|l\rangle\langle l|&j=k\\
\frac{i}{2^j3^k}|j\rangle\langle k|&j<k\\
-\frac{i}{2^k3^j}|j\rangle\langle k|&j>k
\end{cases}
$$
for all $j,k=1,\ldots,n$.
What remains to show is that $\Psi$ is unital, CPTP, and that is has $n^2$ distinct eigenvalues.

For trace-preservation it---by linearity---suffices to check the basis elements $|j\rangle\langle k|$: indeed, ${\rm tr}(\Psi(|j\rangle\langle k|))=0={\rm tr}(|j\rangle\langle k|)$ for all $j\neq k$ and ${\rm tr}(\Psi(|j\rangle\langle j|))=\sum_{l=1}^nA_{lj}=\frac12+\frac12=1={\rm tr}(|j\rangle\langle j|)$ for all $j$.
Similarly, 
$
\Psi({\bf1})=\sum_{j=1}^n\Psi(|j\rangle\langle j|)=\sum_{l=1}^n(\sum_{j=1}^nA_{lj})|l\rangle\langle l|=\sum_{l=1}^n|l\rangle\langle l|={\bf1}
$
shows unitality 
because, much like before, $\sum_{j=1}^nA_{lj}=\frac12+\frac12=1$ for all $l$.
For complete positivity one readily verifies that the Choi matrix of $\Psi$ reads
$$
%\tau\mathsf C(\Psi)\tau^{-1}=
\underbrace{\begin{pmatrix}
A_{11}&i\delta_{12}&\cdots&i\delta_{1n}\\
-i\delta_{12}&A_{22}&\ddots&\vdots\\
\vdots&\ddots&\ddots&i\delta_{(n-1)n}\\
-i\delta_{1n}&\cdots&-i\delta_{(n-1)n}&A_{nn}
\end{pmatrix}}_{=:X}\oplus\;{\rm diag}(A_{jk})_{j\neq k}
$$
(up to permutation) where $\delta_{jk}:=\frac1{2^j3^k}$.
As $A_{jk}\geq 0$ for all $j,k$ the Choi matrix of $\Psi$ is positive semi-definite (equivalently: $\Psi$ is completely positive \cite{Choi75}) if and only if $X\geq 0$.
Because $X$ is clearly Hermitian it suffices to check that $\|{\bf1}-X\|_\infty\leq1$ (Lemma~\ref{lemma_close_id_psd}, Appendix A):
because $A_{jj}=\frac12$ for all $j$
\begin{align*}
\|{\bf1}-X\|_\infty&\leq\|{\bf1}-{\rm diag}(A_{jj})_j\|_\infty+\|{\rm diag}(A_{jj})_j-X\|_\infty\\
&\leq\frac12+\sum_{j\neq k}\frac1{2^j3^k}\|\,|j\rangle\langle k|\,\|_\infty<\frac12+\sum_{j=1}^\infty\frac1{2^j}\sum_{k=1}^\infty\frac1{3^k}=\frac12+\frac12=1\,,
\end{align*}
as desired.

To see that 
%$\Psi$ is a unital channel we only have to show that 
$\Psi$ has $n^2$ distinct eigenvalues
%. For this 
let us consider the representation matrix $\hat\Psi:=\sum_{a,b,j,k}\langle k|\Psi(|a\rangle\langle b|)|j\rangle\, |j\rangle\langle b|\otimes |k\rangle\langle a|$ of $\Psi$ in the standard basis (obviously, eigenvalues of a linear map coincide with eigenvalues of any matrix representation).
It is easy to see that $\hat\Psi$ up to permutation equals $A\oplus{\rm diag}(i\delta_{jk})_{j>k}\oplus{\rm diag}(-i\delta_{kj})_{j<k}$.
Therefore the eigenvalues of $\Psi$ are given by the eigenvalues of $A$ together with $\{\pm\frac{i}{2^j3^k}:1\leq j<k\leq n\}$.
As the latter set has $2\cdot\frac{n(n-1)}{2}=n^2-n$ distinct elements (due to uniqueness of the prime factorization) it suffices to show that $A$ has $n$ distinct eigenvalues none of which are of the form $\pm\frac{i}{2^j3^k}$.
Indeed, the cyclic shift $C$ has characteristic polynomial $\det(\lambda\cdot{\bf1}-C)=\lambda^n-1$ which yields the well-known result that $C$ has eigenvalues $\{e^{\frac{2\pi ik}n}:k=0,\ldots,n-1\}$
\cite{Hartfiel95} meaning $A$ has eigenvalues $\{\frac12+\frac12e^{\frac{2\pi ik}n}:k=0,\ldots,n-1\}$.
In particular, all eigenvalues $\lambda$ of $A$ are distinct and satisfy $|1-\lambda|\leq1$ because
$$
\Big|1-\Big(\frac12+\frac12e^{\frac{2\pi ik}n}\Big)\Big|=\frac12\big|1-e^{\frac{2\pi ik}n}\big|\leq\frac12\big(1+\big|e^{\frac{2\pi ik}n}\big|\big)=1
$$
for all $k=0,\ldots,n-1$.
In particular, the disk $|1-\lambda|\leq1$ intersects the imaginary axis only in $0\not\in \{\pm\frac{i}{2^j3^k}:1\leq j<k\leq n\}$ which, altogether, concludes the proof.
\end{proof}

%Doing the same for CTO (with H,beta) arbitrary takes care of all the channels with common full rank fixed points
%
%Thus, given any convex subset of channels $\mathcal C\subseteq\mathsf{CPTP}(n)$

As explained before, combining Lemma~\ref{lemma_unital_nsquare} with Lemma~\ref{lemma_hartfiel} immediately yields our first main result because all sets considered therein are convex.
While we only state the following theorem explicitly for (unital) channels and positive trace-preserving maps to keep things simple, let us point out that Theorem~\ref{thm_main1} holds for \textit{any} convex set $S\subseteq\mathcal L(\mathbb C^{n\times n})$ which contains $\Psi$ from Lemma~\ref{lemma_unital_nsquare}.
%(such as, e.g., HPTP, P, CP)
%If we can find a unital channel with only simple eigenvalues this also takes care of P, PTP, CP, CPTP because unital channels are a subset of all of these.

\begin{thm}\label{thm_main1}
Given $\Phi\in\mathsf{CPTP}(n)$, there for every $\varepsilon>0$ exists \mbox{$\Phi_\varepsilon\in\mathsf{CPTP}(n)$} such that $\Phi_\varepsilon$ has only simple eigenvalues
%(is diagonalizable) 
and $\|\Phi-\Phi_\varepsilon\|_\diamond<\varepsilon$.
The same holds when replacing both instances of $\mathsf{CPTP}(n)$ by
%$\mathsf{HP}(n)$, $\mathsf{HPTP}(n)$, $\mathsf{P}(n)$, 
$\mathsf{PTP}(n)$
%, $\mathsf{CP}(n)$, 
or
%$\mathsf{CPTP}_{\bf1 }(n)$.
by the unital channels.
%, and when replacing $\|\cdot\|_\diamond$ by any other norm on $\mathcal L(\mathbb C^{n\times n})$.
\end{thm}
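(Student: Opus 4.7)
The plan is to derive this theorem as a direct corollary of Lemmas~\ref{lemma_hartfiel} and~\ref{lemma_unital_nsquare}. Concretely, I would apply Lemma~\ref{lemma_hartfiel} three times, taking $\mathcal{C}$ successively as $\mathsf{CPTP}(n)$, $\mathsf{PTP}(n)$, and the set of unital channels, and in each case using the map $\Psi$ constructed in Lemma~\ref{lemma_unital_nsquare} as the witness $Z$ for statement~(i).

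For this I need to verify two things per set: analytic path-connectedness and membership of $\Psi$. Path-connectedness is in fact straightforward because each of the three sets is convex---sums and convex combinations preserve positivity, complete positivity, trace-preservation, and unitality---so the affine segment $\gamma(t) := (1-t)X_1 + tX_2$ is a real-analytic (even polynomial) path inside the set connecting any two of its elements. Membership is also immediate: Lemma~\ref{lemma_unital_nsquare} produces $\Psi$ that is simultaneously CPTP \textit{and} unital, and since the unital channels sit inside $\mathsf{CPTP}(n)\subseteq\mathsf{PTP}(n)$, the witness $\Psi$ lies in each of the three target sets at once.

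With both hypotheses checked, Lemma~\ref{lemma_hartfiel} yields that the elements with only simple eigenvalues are norm-dense in each of the three sets. To obtain the quantitative diamond-norm bound stated in the theorem, I would invoke the equivalence of all norms on the finite-dimensional space $\mathcal{L}(\mathbb{C}^{n\times n})$ (as recalled in Sec.~\ref{sec_prelim}): density with respect to any one norm is density with respect to $\|\cdot\|_\diamond$, so for every $\Phi$ in the set and every $\varepsilon>0$ an approximant $\Phi_\varepsilon$ of the desired form exists.

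The main obstacle has already been disposed of upstream: building the single channel $\Psi$ with $n^2$ distinct eigenvalues in Lemma~\ref{lemma_unital_nsquare} was the nontrivial content. Once that witness is in hand, the theorem reduces to a short checklist---convexity of each set plus membership of $\Psi$---and no genuinely new estimate is required here.
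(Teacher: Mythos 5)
Your proposal is correct and matches the paper's own argument exactly: the theorem is obtained by feeding the witness $\Psi$ of Lemma~\ref{lemma_unital_nsquare} into Lemma~\ref{lemma_hartfiel}, using convexity of each of the three sets for analytic path-connectedness and norm equivalence to phrase the density in the diamond norm.
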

%\begin{proof}
%{...}
%\end{proof}
%\subsection{Results for dynamical generators}

\noindent Some further classes of channels where this proof strategy may work
%---for one reason or another---
are discussed in the outlook (Sec.~\ref{sec_outlook}).

For now let us instead turn to the dynamics side of things:
A natural question at this point is whether generators of open quantum systems can also be approximated by generators with distinct eigenvalues.
And while the set of all such generators is convex---meaning it falls into the domain of Lemma~\ref{lemma_hartfiel}---it is not immediately clear how to construct an element of that set which has only simple eigenvalues.
It turns out, however, that there is a general correspondence between semigroups (e.g., the quantum channels) having the desired approximation property and their infinitesimal generators having the same property.
In order to make this connection precise let us 
%The same holds for the corresponding Lie wedges;
recall that, given some
%real or complex finite-dimensional vector space $\mathcal V$ as well as a 
closed subsemigroup with identity\footnote{
This means that $S$ is a closed subset of $\mathcal L(\mathcal V)$ such that ${\rm id}\in S$
and that for all $\Phi_1,\Phi_2\in S$ it holds that $\Phi_1\Phi_2\in S$.
}
$S\subseteq\mathcal L(\mathcal V)$ one defines its \textit{Lie wedge}
$\mathsf L(S)$ \cite{HHL89} as the collection of all generators of dynamical semigroups in $S$, that is,
\begin{equation*}%\label{eq:liewedge}
\mathsf L(S):=\{A\in\mathcal L(\mathcal V): e^{tA}\in S\text{ for all }t\geq 0\}\,.
\end{equation*}
\begin{proposition}
Let $\mathcal V$ be a finite-dimensional vector space and let $S\subseteq\mathcal L(\mathcal V)$ be a closed, convex semigroup with identity. If
there exists $Y\in S$ which has only simple eigenvalues, then the set of all $L\in\mathsf L(S)$ which have only simple eigenvalues is norm-dense in $\mathsf L(S)$.
\end{proposition}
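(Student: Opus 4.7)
The plan is to apply Lemma~\ref{lemma_hartfiel} with $\mathcal C=\mathsf L(S)$. Two ingredients are needed: that $\mathsf L(S)$ be analytically path-connected, and that it contain at least one element whose eigenvalues are all simple.

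For the path-connectedness, I would show that $\mathsf L(S)$ is actually a convex cone. Closure under multiplication by a nonnegative scalar is immediate from $e^{t(\alpha A)}=e^{(\alpha t)A}\in S$ for $\alpha,t\geq0$. Closure under addition follows from Trotter's product formula
\begin{equation*}
e^{t(A+B)}=\lim_{n\to\infty}\bigl(e^{tA/n}\,e^{tB/n}\bigr)^n,
\end{equation*}
combined with the hypotheses that $S$ is a closed semigroup containing the identity: each iterate on the right lies in $S$, and so does the limit. Straight line segments $(1-t)A+tB$ then furnish real-analytic paths in $\mathsf L(S)$ between any two of its elements, which is exactly the hypothesis of Lemma~\ref{lemma_hartfiel}.

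The main obstacle is to produce a single $L^\ast\in\mathsf L(S)$ with $\dim\mathcal V$ distinct eigenvalues out of the given $Y\in S$. My proposal is to take $L^\ast:=Y-{\rm id}$ and exploit the Poisson-type identity
\begin{equation*}
e^{tL^\ast}=e^{-t}e^{tY}=\sum_{k=0}^{\infty}\frac{e^{-t}t^k}{k!}\,Y^k,
\end{equation*}
a norm-convergent series whose nonnegative coefficients sum to $1$. Because $Y\in S$, ${\rm id}\in S$, and $S$ is a semigroup, every power $Y^k$ lies in $S$; the truncated and renormalized partial sums $\bigl(\sum_{k=0}^N c_k\bigr)^{-1}\sum_{k=0}^N c_kY^k$ with $c_k:=e^{-t}t^k/k!$ are then finite convex combinations of elements of $S$, hence in $S$ by convexity, and their limit $e^{tL^\ast}$ lies in $S$ by closedness. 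Thus $L^\ast\in\mathsf L(S)$. Since the eigenvalues of $L^\ast$ are exactly $\mu_j-1$, where the $\mu_j$ are the (distinct) eigenvalues of $Y$, $L^\ast$ has $\dim\mathcal V$ distinct eigenvalues, and Lemma~\ref{lemma_hartfiel} delivers the desired density statement.
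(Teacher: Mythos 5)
Your proof is correct and follows the same overall architecture as the paper's: identify $Y-{\rm id}$ as an element of $\mathsf L(S)$ with only simple eigenvalues, observe that $\mathsf L(S)$ is convex (via Trotter), and invoke Lemma~\ref{lemma_hartfiel}. The one place where you genuinely diverge is the justification of the key membership $Y-{\rm id}\in\mathsf L(S)$. The paper obtains this from the $C^1$-curve (tangent-cone) characterization of the Lie wedge: the curve $\gamma(t)=(1-t){\rm id}+tY$ lies in $S$ by convexity, starts at the identity, and has derivative $Y-{\rm id}$ at $t=0$, so that derivative belongs to $\mathsf L(S)$ by a cited result from Lie semigroup theory. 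You instead verify the definition of $\mathsf L(S)$ directly through the Poisson expansion $e^{t(Y-{\rm id})}=\sum_{k\ge0}e^{-t}\tfrac{t^k}{k!}Y^k$, using the semigroup property to place every $Y^k$ in $S$ and convexity plus closedness to pass to the limit. Your route is more self-contained and elementary---it needs no external characterization of Lie wedges, only the stated hypotheses on $S$---at the cost of a slightly longer computation; the paper's route is shorter but leans on the cited machinery. Both arguments use all three hypotheses (closed, convex, semigroup with identity) and both are complete.
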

\begin{proof}
Because $S$ is convex, $X-{\rm id}\in\mathsf L(S)$ for all $X\in S$ (for quantum channels this is also known as ``Markovian approximation'' \cite{Wolf08a}); 
this statement follows from the $C^1$-curve characterization of the Lie wedge (cf.~\cite[Prop.~V.1.7]{HHL89} or \cite[Prop.~2~(iv)]{OSID_thermal_res}): Because $\gamma:[0,1]\to S$, $\gamma(t):=(1-t){\rm id}+tX$---which is well defined as $S$ is convex---is a continuously differentiable curve which starts at the identity its first derivative at zero is in the Lie wedge, i.e.~$\dot\gamma(0)=X-{\rm id}\in\mathsf L(S)$ for all $X\in S$.
Therefore we know that $Y-{\rm id}$ is an element of $\mathsf L(S)$ which has only simple eigenvalues, because the same holds for $Y$ by assumption.
This implies the claim by Lemma~\ref{lemma_hartfiel} together with the well-known fact that $\mathsf L(S)$ is always convex \cite[Prop.~1.14]{LNM1552} as a consequence of the Trotter product formula.
\end{proof}
Thus combining this proposition with Theorem~\ref{thm_main1} immediately yields the corresponding result for dynamical generators; these generators, i.e.~the elements of $\mathsf L(\mathsf{CPTP}(n))$ are often called ``Lindbladian'' or ``\textsc{gksl}-generator''.
\begin{corollary}\label{coro_GKSL_approx}
Given $L\in\mathsf L(\mathsf{CPTP}(n))$, there for every $\varepsilon>0$ exists $L_\varepsilon\in\mathsf L(\mathsf{CPTP}(n))$ such that $L_\varepsilon$ has only simple eigenvalues
%(is diagonalizable).
and $\|L-L_\varepsilon\|_\diamond<\varepsilon$.
The same holds when replacing both instances of $\mathsf{CPTP}(n)$ by
%$\mathsf{HP}(n)$, $\mathsf{HPTP}(n)$, $\mathsf{P}(n)$, 
$\mathsf{PTP}(n)$
%, $\mathsf{CP}(n)$, 
or
%$\mathsf{CPTP}_{\bf1 }(n)$.
by the unital channels.
\end{corollary}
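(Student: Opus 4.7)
The plan is to deduce the corollary by directly applying the Proposition just proved to each of the three candidate semigroups $S\in\{\mathsf{CPTP}(n),\mathsf{PTP}(n),\text{unital channels}\}$. For each such $S$, the Lie wedge $\mathsf L(S)$ is exactly the set of generators appearing in the claim --- so once the Proposition applies, density of the simple-eigenvalue elements inside $\mathsf L(S)$ is immediate.

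First I would verify that each of the three sets is a closed, convex subsemigroup of $\mathcal L(\mathbb C^{n\times n})$ containing the identity. Convexity and containment of ${\rm id}$ are immediate in all three cases. Closure under composition is standard: the composition of two (completely) positive, trace-preserving, or unital maps inherits the same properties. Norm-closedness is likewise routine, since (complete) positivity, trace-preservation, and unitality are each preserved under the relevant norm limits in finite dimensions. This places all three sets squarely within the scope of the Proposition.

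Next, the Proposition needs a witness $Y\in S$ with only simple eigenvalues. Lemma~\ref{lemma_unital_nsquare} provides exactly such a witness: a unital CPTP map $\Psi$ with $n^2$ distinct eigenvalues. Since unital CPTP maps are contained in all three semigroups under consideration (via the inclusions unital channels $\subseteq \mathsf{CPTP}(n)\subseteq\mathsf{PTP}(n)$), the hypothesis of the Proposition is satisfied uniformly for each $S$. Applying it yields norm-density of the simple-eigenvalue elements inside $\mathsf L(S)$ in every case.

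Finally, the statement is phrased in the diamond norm, but since all norms on the finite-dimensional space $\mathcal L(\mathbb C^{n\times n})$ are equivalent (as noted in Sec.~\ref{sec_prelim}), norm-density with respect to any one choice of norm implies density with respect to $\|\cdot\|_\diamond$. There is no genuine obstacle here: the argument is essentially bookkeeping, and its only mildly subtle part is checking the closed-convex-semigroup structure for the three sets simultaneously --- all of which is standard.
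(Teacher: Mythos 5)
Your proposal is correct and follows essentially the same route as the paper, which obtains the corollary by combining the Proposition with the existence of the simple-eigenvalue witness $\Psi$ from Lemma~\ref{lemma_unital_nsquare} (via Theorem~\ref{thm_main1}) for each of the three closed convex semigroups. Your additional remarks on verifying the semigroup structure and on norm equivalence match the paper's own comment that the corollary extends to any closed convex subsemigroup containing the identity and $\Psi$.
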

\noindent Much like before, this corollary readily extends from $\mathsf{CPTP}(n)$ to any closed convex subsemigroup $S\subseteq\mathcal L(\mathbb C^{n\times n})$ which contains the identity as well as $\Psi$ from Lemma~\ref{lemma_unital_nsquare}.\medskip

The final set of channels we want to investigate in this work are the so-called Markovian channels. We first turn to the simpler case of \textit{time-independent Markovian} channels, i.e.~those $\Phi\in\mathsf{CPTP}(n)$ which can be written as $\Phi=e^L$ for some $L\in\mathsf L(\mathsf{CPTP}(n))$.
Unlike all sets considered previously the Markovian channels are neither convex nor analytically path-connected so Lemma~\ref{lemma_hartfiel} does not apply here;
this is why we first had to formulate Corollary~\ref{coro_GKSL_approx} so we can ``lift'' the result from the generators to the corresponding channels now:
%This lets us lift this result to Markovian channels:
\begin{lemma}\label{lemma_timeindep_Markov_approx}
Given $\Phi\in\mathsf{CPTP}(n)$ time-independent Markovian, there for all $\varepsilon>0$ exists a time-independent Markovian channel $\Phi_\varepsilon\in\mathsf{CPTP}(n)$ which has only simple eigenvalues such that $\|\Phi-\Phi_\varepsilon\|_\diamond<\varepsilon$.
The same holds when replacing both instances of $\mathsf{CPTP}(n)$ by
%$\mathsf{HP}(n)$, $\mathsf{HPTP}(n)$, $\mathsf{P}(n)$, 
$\mathsf{PTP}(n)$
%, $\mathsf{CP}(n)$, 
or
%$\mathsf{CPTP}_{\bf1 }(n)$.
by the unital channels.
\end{lemma}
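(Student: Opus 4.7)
The plan is to transfer the approximation property from Lindbladians (Corollary~\ref{coro_GKSL_approx}) to their exponentials via an analytic interpolation. Fix $\Phi = e^L$ with $L \in \mathsf L(S)$, where $S$ is one of $\mathsf{CPTP}(n)$, $\mathsf{PTP}(n)$, or the unital channels.

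First I would construct a single time-independent Markovian channel in $S$ whose eigenvalues are all simple. Applying Corollary~\ref{coro_GKSL_approx} to the zero generator $0 \in \mathsf L(S)$ yields, for every $\delta > 0$, some $L_0 \in \mathsf L(S)$ with $n^2$ simple eigenvalues and $\|L_0\|_\diamond < \delta$. Since all norms on the finite-dimensional space $\mathcal L(\mathbb C^{n \times n})$ are equivalent, choosing $\delta$ small enough forces the spectrum of $L_0$ to lie inside the disk $\{z \in \mathbb C : |z| < \pi\}$, on which $\exp$ is injective; hence the $n^2$ exponentials of the eigenvalues of $L_0$ remain pairwise distinct, so that $e^{L_0}$ is a time-independent Markovian channel in $S$ with only simple eigenvalues.

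Next, convexity of $\mathsf L(S)$ (as used in the proof of the preceding Proposition) gives $M(t) := (1-t)L + tL_0 \in \mathsf L(S)$ for every $t \in [0,1]$, so $\gamma(t) := e^{M(t)}$ is a real-analytic curve of time-independent Markovian channels in $S$. Because $\gamma(1) = e^{L_0}$ has $n^2$ distinct eigenvalues, Lemma~\ref{lemma_kato_pert} guarantees that $\gamma(t)$ also has $n^2$ distinct eigenvalues for all but finitely many $t \in [0,1]$. By continuity of $\gamma$ at $0$ I then pick $t_\varepsilon \in (0,1]$ outside the finite exceptional set and small enough that $\|\Phi - \gamma(t_\varepsilon)\|_\diamond < \varepsilon$, and define $\Phi_\varepsilon := \gamma(t_\varepsilon)$.

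The main obstacle is the first step: a Lindbladian with simple eigenvalues does not automatically exponentiate to a channel with simple eigenvalues, because two distinct eigenvalues of $L_0$ differing by a nonzero integer multiple of $2\pi i$ would collide under $\exp$. Controlling $\|L_0\|_\diamond$ through Corollary~\ref{coro_GKSL_approx} keeps the spectrum inside the injectivity region of $\exp$ and circumvents this issue, after which the rest of the argument is a routine application of Kato's perturbation lemma along the convex combination of generators.
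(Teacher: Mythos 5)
Your argument is correct, but it takes a genuinely different route from the paper's. The paper perturbs the given generator itself: it takes $L_\varepsilon\in\mathsf L(S)$ with simple eigenvalues and $\|L-L_\varepsilon\|_\diamond<\varepsilon/2$ from Corollary~\ref{coro_GKSL_approx}, observes (as you do) that injectivity of $\exp$ on a region containing a sufficiently scaled-down spectrum makes $e^{t'L_\varepsilon}$ have simple eigenvalues for small $t'>0$, runs Lemma~\ref{lemma_kato_pert} along the curve $t\mapsto e^{tL_\varepsilon}$ to find an admissible time near $1$, and finally controls $\|e^{L}-e^{tL_\varepsilon}\|_\diamond$ via the contractivity estimate of Lemma~\ref{lemma_exp_contractive} --- which for the $\mathsf{PTP}$ case forces a separate detour through $\|\cdot\|_{1\to1}$ and the bound $\|\cdot\|_\diamond\leq n\|\cdot\|_{1\to1}$. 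You instead fix one reference generator $L_0$ obtained by applying Corollary~\ref{coro_GKSL_approx} to $0\in\mathsf L(S)$, force its spectrum into the injectivity region of $\exp$ by shrinking $\|L_0\|_\diamond$ (legitimate, since the spectral radius is dominated by any norm up to a dimension-dependent constant), and then run Kato along the convex path $t\mapsto e^{(1-t)L+tL_0}$, which stays in the time-independent Markovian elements of $S$ by convexity of the Lie wedge; closeness to $\Phi$ follows from mere continuity of this curve at $t=0$. Your version buys two simplifications: Lemma~\ref{lemma_exp_contractive} is not needed at all, and the three cases $\mathsf{CPTP}$, $\mathsf{PTP}$ and unital are handled uniformly, since continuity in one norm is continuity in every norm in finite dimensions. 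The paper's version is more quantitative, giving explicit control of the approximating generator; both proofs share the same two essential ingredients, namely Corollary~\ref{coro_GKSL_approx} and the fact that simplicity of eigenvalues survives exponentiation once the spectrum is small enough.
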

\begin{proof}
By assumption there exists $L\in\mathsf L(\mathsf{CPTP}(n))$ such that $\Phi=e^L$, and by Corollary~\ref{coro_GKSL_approx} there exists $L_\varepsilon\in\mathsf L(\mathsf{CPTP}(n))$ such that $\|L-L_\varepsilon\|_\diamond<\frac{\varepsilon}{2}$ and $L_\varepsilon$ has only simple eigenvalues (so $L_\varepsilon\neq 0$).
Then one also finds $\delta\in(0,1]$ such that $e^{tL_\varepsilon}$ for all $t\in(1-\delta,1+\delta)\setminus\{1\}$ has only simple eigenvalues:
to see this, first note that for all $t'>0$ with $t'\max_{\lambda\in\sigma(L_\varepsilon)}|{\rm Im}(\lambda)|<\pi$ (where $\sigma(L_\varepsilon)$ denotes the spectrum of $L_\varepsilon$)
injectivity of $\exp$ on the strip $\{x+iy:x\in\mathbb R, y\in[-\pi,\pi)\}$ implies that all eigenvalues of $e^{t'L_\varepsilon}$ are distinct (because all eigenvalues of $L_\varepsilon$ are distinct and $t'$ is small enough).
Thus Lemma~\ref{lemma_kato_pert} shows that the analytic function $t\mapsto e^{tL_\varepsilon}$ has $n^2$ distinct eigenvalues for all but finitely many exceptional points $t\in[0,2]$. This yields $0<\delta\leq 1$ such that $t\in(1-\delta,1+\delta)\setminus\{1\}$ contains no exceptional point.

In particular this shows that $\Phi_\varepsilon:=e^{(1+\min\{\frac\delta2,\frac\varepsilon{2\|L_\varepsilon\|_\diamond}\})L_\varepsilon}$ has only simple eigenvalues so all that is left is to prove the estimate $\|\Phi-\Phi_\varepsilon\|_\diamond<\varepsilon$.
Using Lemma~\ref{lemma_exp_contractive}~(i) (Appendix~A)
%this follows from the following computation:
we compute
\begin{align*}
\|\Phi-\Phi_\varepsilon\|_\diamond&
\leq\|e^L-e^{L_\varepsilon}\|_\diamond+
\|e^{L_\varepsilon}-
e^{(1+\min\{\frac\delta2,\frac\varepsilon{2\|L_\varepsilon\|_\diamond}\})L_\varepsilon}\|_\diamond\\
&\leq\|L-L_\varepsilon\|_\diamond+\Big\|L_\varepsilon-\Big(1+\min\Big\{\frac\delta2,\frac\varepsilon{2\|L_\varepsilon\|_\diamond}\Big\}\Big)L_\varepsilon\Big\|_\diamond\\
&<\frac{\varepsilon}{2}+\frac\varepsilon{2\|L_\varepsilon\|_\diamond}\|L_\varepsilon\|_\diamond=\varepsilon\,.
\end{align*}
For the unital case one argues analogously.
This proof can be adapted to the case where $\mathsf{CPTP}$ is replaced by $\mathsf{PTP}$ by using Lemma~\ref{lemma_exp_contractive}~(ii) instead of Lemma~\ref{lemma_exp_contractive}~(i): this yields $\Phi_\varepsilon=e^{L_\varepsilon}$ with only simple eigenvalues such that $\|\Phi-\Phi_\varepsilon\|_{1\to 1}<\varepsilon$. Also lowering $\varepsilon$ to $\frac{\varepsilon}{n}$ shows $\|\Phi-\Phi_\varepsilon\|_{\diamond}\leq n\|\Phi-\Phi_\varepsilon\|_{1\to 1}<\varepsilon$ where the first inequality can, e.g., be found in\footnote{
While Paulsen proves this for the completely bounded norm (i.e.~in the Heisenberg picture) this readily transfers to the diamond norm by means of the usual duality.
}
\cite[Prop.~8.11 ff.]{Paulsen03}.
%This concludes the proof.
\end{proof}
Now in a second step we can extend this result from the time-independent to the time-dependent Markovian channels. For this recall that---in accordance with \cite[Thm.~16]{Wolf08a}---$\Phi\in\mathsf{CPTP}(n)$ is called \textit{time-dependent Markovian} if for all $\varepsilon>0$ there exist $L_1,\ldots,L_m\in\mathsf L(\mathsf{CPTP}(n))$, $m\in\mathbb N$ such that $\|\Phi-e^{L_1}\cdot\ldots\cdot e^{L_m}\|_\diamond<\varepsilon$.
The set of all time-dependent Markovian channels will be denoted by $\mathsf{MCPTP}(n)$.
As before this set does not admit any analytic structure
but the fact that the ``building blocks'' of $\mathsf{MCPTP}(n)$ have the desired approximation property (Lemma~\ref{lemma_timeindep_Markov_approx}) allows us to carry over the result.

%Let us point out that---unlike our first main result---the following theorem cannot be directly deduced from Lemma~\ref{lemma_hartfiel} because $\mathsf{MCPTP}(n)$ is not (or at least not obviously) analytically path connected
\begin{thm}
Given $\Phi\in\mathsf{MCPTP}(n)$, there for all $\varepsilon>0$ exists $\Phi_\varepsilon\in\mathsf{MCPTP}(n)$ which has only simple eigenvalues such that $\|\Phi-\Phi_\varepsilon\|_\diamond<\varepsilon$.
%The same result holds when replacing $\mathsf{CPTP}$ by $\mathsf{PTP}$.
\end{thm}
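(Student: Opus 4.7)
The plan is to marry the defining approximation property of $\mathsf{MCPTP}(n)$ with a rescaled analytic path inside $\mathsf{MCPTP}(n)$ that is forced by Lemma~\ref{lemma_kato_pert} to acquire simple spectrum. Given $\Phi$ and $\varepsilon>0$, I first unfold the definition of time-dependent Markovianity to obtain $L_1,\ldots,L_m\in\mathsf L(\mathsf{CPTP}(n))$ with $\|\Phi-e^{L_1}\cdots e^{L_m}\|_\diamond<\varepsilon/3$.

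The delicate middle step is to find a correction $L_*\in\mathsf L(\mathsf{CPTP}(n))$ of arbitrarily small diamond norm such that $\Sigma:=L_1+\cdots+L_m+L_*$ has $n^2$ pairwise distinct eigenvalues. Because $\mathsf L(\mathsf{CPTP}(n))$ is a convex cone (in particular, closed under addition), the translated cone $\mathcal C:=\sum_j L_j+\mathsf L(\mathsf{CPTP}(n))$ is convex and therefore analytically path connected. It contains at least one element with simple spectrum: by Corollary~\ref{coro_GKSL_approx} pick some $L^\circ\in\mathsf L(\mathsf{CPTP}(n))$ with simple eigenvalues and observe that for $c>0$ large enough $cL^\circ+\sum_jL_j=c(L^\circ+\tfrac1c\sum_jL_j)$ is $c$ times a vanishing perturbation of $L^\circ$ and hence inherits its simple spectrum. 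Lemma~\ref{lemma_hartfiel} now upgrades this single instance to norm-density of simple-spectrum elements throughout $\mathcal C$, so $L_*$ with $\|L_*\|_\diamond$ as small as desired exists. Fixing it small enough to force $\|\mathrm{id}-e^{L_*}\|_\diamond<\varepsilon/3$ and invoking submultiplicativity of the diamond norm together with $\|e^{L_j}\|_\diamond=1$ yields $\|e^{L_1}\cdots e^{L_m}-e^{L_1}\cdots e^{L_m}e^{L_*}\|_\diamond<\varepsilon/3$.

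The last step introduces the rescaling parameter. Define the entire function
\[
\gamma(t):=e^{tL_1}e^{tL_2}\cdots e^{tL_m}e^{tL_*},\qquad t\in[0,2].
\]
The cone property $tL_j,tL_*\in\mathsf L(\mathsf{CPTP}(n))$ for $t\ge 0$ gives $\gamma(t)\in\mathsf{MCPTP}(n)$ throughout. The Taylor expansion $\gamma(t)=\mathrm{id}+t\Sigma+O(t^2)$ combined with eigenvalue perturbation around $\mathrm{id}$ forces the eigenvalues of $\gamma(t)$, for small $t>0$, to take the form $1+t\mu_j+O(t^2)$ with $\mu_1,\ldots,\mu_{n^2}$ the distinct eigenvalues of $\Sigma$; hence $\gamma(t)$ has $n^2$ distinct eigenvalues for all sufficiently small $t>0$. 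Lemma~\ref{lemma_kato_pert} applied to $\gamma$ on the compact interval $[0,2]$ now promotes this to: $\gamma(t)$ has $n^2$ distinct eigenvalues for all $t\in[0,2]$ outside a finite exceptional set. Picking any non-exceptional $t^*\in[1,2]$ close enough to $1$ to enforce $\|\gamma(1)-\gamma(t^*)\|_\diamond<\varepsilon/3$, the channel $\Phi_\varepsilon:=\gamma(t^*)\in\mathsf{MCPTP}(n)$ has simple spectrum and a three-term triangle inequality delivers $\|\Phi-\Phi_\varepsilon\|_\diamond<\varepsilon$. The main obstacle is the middle step: Corollary~\ref{coro_GKSL_approx} alone only gives density of simple-spectrum elements inside $\mathsf L(\mathsf{CPTP}(n))$ itself rather than inside the affine constraint $\sum_jL_j+\mathsf L(\mathsf{CPTP}(n))$, so the large-$c$ scaling trick is precisely what is needed to seed Lemma~\ref{lemma_hartfiel} with a simple-eigenvalue representative inside that translated cone.
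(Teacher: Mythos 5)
Your proposal is correct, but it takes a genuinely different route from the paper's. The paper perturbs the product factor by factor: it first replaces $e^{L_1}$ by some $e^{\tilde L_1}$ with simple spectrum via Lemma~\ref{lemma_timeindep_Markov_approx}, and then inductively rescales each subsequent generator, applying Lemma~\ref{lemma_kato_pert} to $t\mapsto e^{\tilde L_1}\cdots e^{\tilde L_k}e^{tL_{k+1}}$, anchored at $t=0$ where the partial product is already known to have simple spectrum. You instead append one small correction factor $e^{L_*}$ and rescale the \emph{entire} product by a single global parameter, anchoring Lemma~\ref{lemma_kato_pert} near $t=0$ where $\gamma(t)=\mathrm{id}+t\Sigma+O(t^2)$ and $\Sigma=\sum_jL_j+L_*$ is engineered to have $n^2$ distinct eigenvalues. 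Your route avoids the induction and does not actually need Lemma~\ref{lemma_timeindep_Markov_approx} (only the existence of one simple-spectrum Lindbladian), at the price of two extra ingredients: the translated-cone/large-$c$ argument producing $L_*$ (which works; in fact, once the large-$c$ scaling yields one simple-spectrum point on the ray $\sum_jL_j+tL^\circ$, Lemma~\ref{lemma_kato_pert} applied to that ray already gives arbitrarily small admissible $L_*$, so the detour through Lemma~\ref{lemma_hartfiel} is optional), and a degenerate eigenvalue perturbation at the identity. The latter is the only step you state loosely: perturbing an eigenvalue of algebraic multiplicity $n^2$ to first order would in general only yield Puiseux expansions, but here the unperturbed operator is the \emph{scalar} $\mathrm{id}$, so one can write $\gamma(t)=\mathrm{id}+tB(t)$ with $B$ analytic and $B(0)=\Sigma$, whence $\sigma(\gamma(t))=\{1+t\beta:\beta\in\sigma(B(t))\}$ and distinctness for all small $t>0$ follows because having $n^2$ distinct eigenvalues is an open condition satisfied by $B(0)$. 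With that one line made explicit the argument is complete; the remaining bookkeeping ($\|\mathrm{id}-e^{L_*}\|_\diamond\le\|L_*\|_\diamond$ from Lemma~\ref{lemma_exp_contractive}, $\|e^{L_j}\|_\diamond=1$, membership $\gamma(t)\in\mathsf{MCPTP}(n)$ via the cone property of $\mathsf L(\mathsf{CPTP}(n))$, and continuity of $\gamma$ at $t=1$) is all sound.
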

\begin{proof}
By definition there exist $L_1,\ldots,L_m\in\mathsf L(\mathsf{CPTP}(n))$, $m\in\mathbb N$ such that $\|\Phi-e^{L_1}\cdot\ldots\cdot e^{L_m}\|_\diamond<\frac{\varepsilon}{2}$; w.l.o.g.~$L_1,\ldots,L_m\neq 0$. 
%W.l.o.g.~we may assume $m>1$ by the previous corollary.
First, by Lemma~\ref{lemma_timeindep_Markov_approx}
there exists $\tilde L_1\in\mathsf L(\mathsf{CPTP}(n))$ such that $e^{\tilde L_1}$ has only simple eigenvalues and $\|e^{L_1}-e^{\tilde L_1}\|_\diamond<\frac{\varepsilon}{2m}$.
Now if $m=1$ then we can combine the two estimates and we are done. If $m>1$ then we execute the following inductive procedure, starting from $k=1$:

Define the map $\gamma_k:\mathbb C\to\mathcal L(\mathbb C^{n\times n})$ via $\gamma_k(t):=e^{\tilde L_1}\ldots e^{\tilde L_k}e^{tL_{k+1}}$. This map is obviously analytic and has only simple eigenvalues for $t=0$; hence Lemma~\ref{lemma_kato_pert} shows that $\gamma_k(t)$ has only simple eigenvalues for all but finitely many $t\in[0,1]$.
%a constant aside from the exceptional points (out of which there only finitely many in $[0,1]$ due to compactness)
In particular there exists $t_{k+1}\in[1-\frac{\varepsilon}{2m\|L_{k+1}\|_\diamond},1]$ such that $\gamma_k(t_{k+1})$ has only simple eigenvalues. Define $\tilde L_{k+1}:=t_{k+1}L_{k+1}$. If $k<m-1$, then repeat this procedure for $k\to k+1$.

If $k=m-1$, then we found $\tilde L_1,\ldots,\tilde L_m\in\mathsf L(\mathsf{CPTP}(n))$ such that $\Phi_\varepsilon:=e^{\tilde L_1}\cdot\ldots\cdot e^{\tilde L_m}\in\mathsf{MCPTP}(n)$ has only simple eigenvalues. Moreover, the following computation shows that $\Phi_\varepsilon$ is $\varepsilon$-close to $\Phi$:
\begin{align*}
\|\Phi-\Phi_\varepsilon\|_\diamond&<\frac{\varepsilon}{2}+\Big\| \prod_{j=1}^me^{L_j}-\prod_{j=1}^me^{\tilde L_j} \Big\|_\diamond\\
&=\frac{\varepsilon}{2}+\Big\|\sum_{j=1}^m\Big(\prod_{l=1}^{j-1}e^{\tilde L_l}\Big)(e^{L_j}-e^{\tilde L_j})\Big(\prod_{l=j+1}^me^{L_l}\Big)\Big\|_\diamond\\
&\leq\frac{\varepsilon}{2}+\sum_{j=1}^m\Big(\prod_{l=1}^{j-1}\|e^{\tilde L_l}\|_\diamond\Big)\|e^{L_j}-e^{\tilde L_j}\|_\diamond\Big(\prod_{l=j+1}^m\|e^{L_l}\|_\diamond\Big)
\end{align*}
Because all channels have diamond norm $1$ \cite[Prop.~3.44]{Watrous18}, together with Lemma~\ref{lemma_exp_contractive} (Appendix~A) this estimate simplifies to
\begin{align*}
\|\Phi-\Phi_\varepsilon\|_\diamond&<
%\frac{\varepsilon}{2}+\sum_{j=1}^m\|e^{L_j}-e^{\tilde L_j}\|_\diamond\\
%&=
\frac{\varepsilon}{2}+\|e^{L_1}-e^{\tilde L_1}\|_\diamond+\sum_{j=2}^m\|e^{L_j}-e^{\tilde L_j}\|_\diamond\\
&<\frac{\varepsilon}{2}+\frac{\varepsilon}{2m}+\sum_{j=2}^m\|{L_j}-{\tilde L_j}\|_\diamond\\
&=\frac{\varepsilon(m+1)}{2m}+\sum_{j=2}^m (1-t_j)\|L_j\|_\diamond\\
&\leq \frac{\varepsilon(m+1)}{2m}+\sum_{j=2}^m \frac{\varepsilon}{2m\|L_{j}\|_\diamond}\|L_j\|_\diamond=\varepsilon\,.\tag*{\qedhere}
\end{align*}
%For the additional statement one pursues the same strategy together with the well-known fact that $\|\Psi\|_{1\to 1}=1$ for all $\Psi\in\mathsf{PTP}(n)$ \cite[Lemma~1]{Koss72b}.
%Next approximate the constant path $L_c$ def'd via $L_2,L_3,\ldots$ by something analytic ($\varepsilon/3$) $\to$ solution with initial condition $\Phi_\varepsilon$ is analytic (?). But solution started in simple ev so all but finitely many channels of the solution have simple ev + final point is close to $\Phi$ by
%$$
%\|\Phi-\tilde\Phi(T)\|\leq\|\Phi- \Phi_\varepsilon e^{L_2}\cdot\ldots e^{L_m}\|+\|\Phi_\varepsilon e^{L_2}\cdot\ldots e^{L_m}-\tilde\Phi(T)\|\leq\|L_1-\tilde L_1\|+\|L_c-\tilde L\|_1<\frac{2\varepsilon}{3}
%$$
%If final point does not have simple ev, there is a point on the solution with simple ev in the $\varepsilon/3$ ball
%
%$L$ analytic $\Rightarrow$ solution analytic \cite[Ch.~2, Coro.~2.1.1]{DF84}
%
%polynomials are dense in $(L^1([0,t_f]),\|\cdot\|_1)$ (if co-domain is finite dimensional): density of the continuous functions in $L^1$ \cite[Thm.~3.14]{Rudin86} together with the fact that polynomials are dense in the continuous functions by Stone-Weierstrass
\end{proof}
\noindent Finally, as before this theorem applies to the case where $\mathsf{CPTP}$ is replaced by any closed convex semigroup $S\subseteq\mathcal L(\mathbb C^{n\times n})$ which contains the identity as well as $\Psi$ from Lemma~\ref{lemma_unital_nsquare} (such as, e.g., $\mathsf{PTP}$ or the unital channels).
\section{Worked Example}\label{sec_example}

To see that there indeed exist convex subsets of channels which are all non-diagonalizable \& to illustrate this paper's core perturbation argument consider 
the family of channels $\{\Phi(\mu)\}_{\mu\in[0,1]}$ defined by its Pauli transfer matrix
\begin{equation}\label{eq:nondiag_convex_set}
\mathsf P(\Phi(\mu)):=\begin{pmatrix}
1&0&0&0\\0&0&0&-\mu\\0&0&0&0\\0&0&0&0
\end{pmatrix}\,,
\end{equation}
that is, $\Phi(\mu)$ is the convex combination of the unital reset channel $\rho\mapsto{\rm tr}(\rho)\frac{\bf1}2$ and the non-diagonalizable channel from Eq.~\eqref{eq:Paulitransfer_nondiag}.
% which is, obviously, not diagonalizable for any $\lambda\in(0,1]$.
%We will come back to this example in Section~\ref{sec_example} below 

\begin{remark}
Modifying the above strategy lets one easily construct examples of non-diagonalizable unital channels: Given any matrix $\Lambda\in\mathbb R^{3\times 3}$ which is not diagonalizable (over $\mathbb C$), for all $\varepsilon>0$ small enough
$$
\begin{pmatrix}
1&0\\0&\varepsilon\Lambda
\end{pmatrix}
$$
is the Pauli transfer matrix of a non-diagonalizable qubit channel.
This is readily verified via the Fujiwara-Algoet conditions \cite{FA99,RSW02} $|s_1\pm s_2|\leq|1\pm s_3|$ which encode complete positivity in the (real) diagonal entries $s_1\geq s_2\geq |s_3|$ of the Lorentz normal form of $\Lambda$: for unital channels these conditions are necessary and sufficient, and it is obvious that these conditions are satisfied trivially whenever $|s_j|\leq\frac13$ for all $j$.
As an example this procedure yields a \textit{two-dimensional} convex set of (Pauli transfer matrices of) channels which are non-diagonalizable whenever $(a,b)\neq(0,0)$:
$$\Big\{\begin{pmatrix}
1&0&0&0\\0&0&a&0\\0&0&0&b\\0&0&0&0
\end{pmatrix}:a,b\in\mathbb R\text{ such that }|a-b|,|a+b|\leq 1\Big\}$$
Finally note that all examples of non-diagonalizable channels pertain to higher dimensions by extending the channel by, e.g., the identity on the additional diagonal block.
\end{remark}

Let us come back to Eq.~\eqref{eq:nondiag_convex_set} and explicitly approximate these 
%unital 
channels by unital CPTP maps which have only simple eigenvalues.
The perturbing channel from Lemma~\ref{lemma_unital_nsquare} in the case of qubits acts like
$$
\Psi(\rho):=\begin{pmatrix}
\frac12(\rho_{11}+\rho_{22})& \frac{i}{18}\rho_{12} \\
-\frac{i}{18}\rho_{12}  &\frac12(\rho_{11}+\rho_{22})
\end{pmatrix}\,.
$$
Indeed, the convex combination
\begin{equation}\label{eq:convexcomb_pauli_pert}
\mathsf P\big((1-\lambda)\Phi(\mu)+\lambda\Psi  \big)
%=(1-\lambda)\mathsf P(\Phi(\mu))+\lambda\mathsf P(\Psi)
=\begin{pmatrix}
 1 & 0 & 0 & 0 \\
 0 & 0 & \frac{\lambda}{18} & -((1-\lambda) \mu) \\
 0 & -\frac{\lambda}{18} & 0 & 0 \\
 0 & 0 & 0 & 0
\end{pmatrix}
\end{equation}
has eigenvalues $\{1,0,\frac{i}{18}\lambda,-\frac{i}{18}\lambda\}$ which are simple for all $(\lambda,\mu)\in(0,1]\times[0,1]$, which is in agreement with Lemma~\ref{lemma_kato_pert}.
A quick look at the eigenvectors also shows how the perturbation ``fixes'' the problem of the eigenvalue $0$ of $\mathsf P(\Phi(\mu))$ having algebraic multiplicity $3$ but geometric multiplicity $2$:
For $\lambda=0$ the eigenvector $(0,0,(1-\lambda)\mu,\frac{\lambda}{18})^\top$ (of Eq.~\eqref{eq:convexcomb_pauli_pert} to the eigenvalue $0$) can be written as a linear combination of $(0,-i,1,0)^\top$, $(0,i,1,0)^\top$ (corresponding to the eigenvalues $\frac{i}{18}\lambda$, $-\frac{i}{18}\lambda$) which of course becomes impossible as soon as $\lambda>0$.
\section{Outlook}\label{sec_outlook}

In this paper we reviewed the known fact that channels and generators of dynamics can fail to be diagonalizable,
and we demonstrated how tools from perturbation theory can be used
%in quantum information 
to approximate such non-diagonalizable elements by diagonalizable maps ``from the same class'', i.e.~under additional constraints.
We expect that our findings will be a strong, yet simple tool for proving results for quantum channels or for Lindbladians that are stable under small perturbations such as, e.g., approximation results or (non-strict) inequalities.

While we explicitly dealt with unital, Markovian, and general channels
one may instead want to consider channel classes from resource-theoretic approaches such as, e.g., LOCC or (closed) thermal operations. Both these sets are
%well known to be 
convex---cf.~\cite{CLMOW14} and \cite{Gour22,vomEnde22thermal}, respectively---so Lemma~\ref{lemma_hartfiel} applies meaning all one has to do is to construct one element from each set which has only simple eigenvalues.
While this sounds easy at first, the construction for the unital case from Lemma~\ref{lemma_unital_nsquare} does not carry over without further ado;
for the case of thermal operations, for example, one would have to first show that almost all Gibbs-stochastic matrices have only simple eigenvalues---regardless of the exact Gibbs state\footnote{
For those interested: The last point is what makes this non-trivial. While the $\beta$-permutations \cite{Alhambra19,Mazurek19,PolytopeDegen22} are candidates for Gibbs-stochastic matrices with distinct eigenvalues many of them are only valid in certain parameter regimes, i.e.~only if the eigenvalues of the Gibbs state are interrelated in a specific way.
We leave a general construction that is valid for all Gibbs states as an open problem.
}.
%from unital to channels with any fixed point: not clear because no obvious construction for $d$-stochastic matrix which has only simple eigenvalues (which works for all $d>0$) $\to$ cf.~\textit{(Heft S.~26 ff.)}
%
Another problem one could tackle next is to find perturbations of (non-diagonalizable) channels or Lindbladians which, additionally, preserve other spectral properties such as, e.g., the principal eigenspaces;
so far this question has only been looked at for stochastic matrices \cite{PG20,PG22}.
%and could be interesting to extend to quantum channels and the like.

Finally, one may ask whether our ``static'' results can be generalized to true dynamical settings:
One question would be whether (Markovian) dynamics $\{\Phi(t)\}_{t\in[0,T]}\subseteq\mathsf{CPTP}(n)$ can always be approximated by (Markovian) dynamics which have only simple eigenvalues (for all but finitely many times).
%properties (diagonalizable, simple eigenvalues a.e., bijective)? A 
The connection to this paper's techniques is that Lipschitz-continuous dynamics are known to be approximable by analytic (Stinespring) dynamics \cite{vE24_Lipschitz_approx}, where the latter lie in the realm of Lemma~\ref{lemma_kato_pert}. However, 
%---in the case of ... it is not clear how to circumvent the exceptional points
it is not clear how one can, in general, guarantee that these approximating dynamics feature one point
in time where the all eigenvalues are simple---which would be sufficient to guarantee that this holds for almost all times. 
%idea: final points simple ev (how to guarantee?) + find analytic connecting path
%Either way, all these open points show that this work merely scratches the surface and that there are still many interesting questions to explore.

\section*{Acknowledgments}
I would like to thank
%Philippe Faist whose question sparked this paper, to Gunther Dirr for
%making me aware of the paper of Dole\v{z}al \cite{Dolezal64}  (cf.~Remark~\ref{rem_cont_ac_app}, Appendix~\appref{B}), as well as to
%%I am grateful to
%%Gunther Dirr, 
Emanuel Malvetti 
%and Fereshte Shahbeigi
%%, Thomas Schulte-Herbr\"uggen, and Amit Devra
%Sumeet Khatri
%%and the anonymous referee
%for his insight that Lemma~\ref{lemma0}---and thus Theorem~\ref{thm1}---holds even beyond Hermitian matrices.
% i and 
for fruitful discussions during the preparation of this manuscript.
%Moreover I would like to thank Jonas Kitzinger and, again, Sumeet Khatri for proofreading a preliminary version of this manuscript.
This work has been supported by the Einstein Foundation (Einstein Research Unit on Quantum Devices) and the MATH+ Cluster of Excellence.

\section*{Appendix A: Auxiliary lemmata}
This appendix exists so we could outsource two somewhat technical lemmata.
The first one is a well-known sufficient criterion for positive semi-definiteness which we state and prove for the reader's convenience.

\begin{lemma}\label{lemma_close_id_psd}
Let $m\in\mathbb N$ and $X\in\mathbb C^{m\times m}$ Hermitian be given. If $\|{\bf1}-X\|_\infty\leq 1$, then $X$ is positive semi-definite.
\end{lemma}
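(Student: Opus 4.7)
The plan is to exploit the spectral characterization of the operator norm for Hermitian matrices. Since $X$ is Hermitian, so is $\mathbf{1}-X$, and the operator norm of a Hermitian matrix coincides with its spectral radius, i.e.~the maximum of $|\mu|$ over all eigenvalues $\mu$ of $\mathbf{1}-X$.

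First I would recall (or briefly justify via unitary diagonalization) that for Hermitian $H$ one has $\|H\|_\infty = \max_{\mu \in \sigma(H)}|\mu|$. Next, by the spectral mapping theorem the eigenvalues of $\mathbf{1}-X$ are exactly $\{1-\lambda : \lambda \in \sigma(X)\}$, and since $X$ is Hermitian these are real. The hypothesis $\|\mathbf{1}-X\|_\infty \leq 1$ then translates directly into $|1-\lambda| \leq 1$ for every $\lambda \in \sigma(X)$, which rearranges to $0 \leq \lambda \leq 2$.

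Finally, since all eigenvalues of the Hermitian matrix $X$ are non-negative, $X$ is positive semi-definite by the usual spectral theorem characterization (unitarily diagonalize $X = U\,\mathrm{diag}(\lambda_i)\,U^\dagger$ and note that $\langle v, X v\rangle = \sum_i \lambda_i |(U^\dagger v)_i|^2 \geq 0$ for every $v$). No real obstacle is expected here; the argument is essentially a one-line spectral computation, and the only care needed is to explicitly invoke Hermiticity of $X$ to ensure that $\|\mathbf{1}-X\|_\infty$ actually equals the spectral radius rather than merely bounding it.
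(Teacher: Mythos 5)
Your argument is correct, but it takes a different route from the paper. You pass through the spectrum: for Hermitian matrices the operator norm equals the spectral radius, so $\|{\bf1}-X\|_\infty\leq 1$ forces every eigenvalue $\lambda$ of $X$ to satisfy $|1-\lambda|\leq 1$, hence $\lambda\in[0,2]$, and non-negativity of the (real) spectrum of a Hermitian matrix gives positive semi-definiteness. The paper instead works directly with the quadratic form: it writes $\langle z|X|z\rangle=\|z\|^2-\langle z|({\bf1}-X)|z\rangle$ and bounds the second term by $\|z\|^2\|{\bf1}-X\|_\infty$, using Hermiticity only to ensure that $\langle z|({\bf1}-X)|z\rangle$ is real. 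The paper's version is slightly more economical in its prerequisites---it needs only the elementary estimate $|\langle z|A|z\rangle|\leq\|z\|^2\|A\|_\infty$ and never invokes the spectral theorem---whereas your version makes the spectral content of the hypothesis explicit (it confines $\sigma(X)$ to the interval $[0,2]$, i.e.~to the real slice of the disk $|1-\lambda|\leq 1$), which arguably gives more insight into why the criterion works. Both are complete and rigorous; your one point of care---that for a general (non-normal) matrix the operator norm only bounds the spectral radius from above, so Hermiticity is genuinely needed for the equality---is correctly identified.
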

\begin{proof}
Given any $z\in\mathbb C^m$ we first rewrite $\langle z|X|z\rangle=\langle z|{\bf1}|z\rangle-\langle z|{\bf1}-X|z\rangle$. Because $X$ is Hermitian, $\mathbb R\ni\langle z|{\bf1}-X|z\rangle\leq|\langle z|{\bf1}-X|z\rangle|$ which lets us compute
\begin{align*}
\langle z|X|z\rangle&\geq \|z\|^2-|\langle z|{\bf1}-X|z\rangle|\\
&\geq \|z\|^2-\|z\|\|{\bf1}-X\|_\infty\|z\|=\|z\|^2(1-\|{\bf1}-X\|_\infty)\geq 0\,.\tag*{\qedhere}
\end{align*}
\end{proof}

The second result is concerned with how the distance of infinitesimal generators relates to the distance of their exponentials.
\begin{lemma}\label{lemma_exp_contractive}
For all $n\in\mathbb N$ the following statements hold.
\begin{itemize}
\item[(i)] For all
%$L^1$ functions $L_1,L_2:[0,T]\to \mathsf L(\mathsf{CPTP}(n))$ be given and consider $\dot\Phi_j(t)=L_j(t)\Phi(t)$ for arbitrary but fixed $\Phi(0)\in\mathsf{CPTP}(n)$. Then
$L_1,L_2\in\mathsf L(\mathsf{CPTP}(n))$ one has 
%$$\|\Phi_1(T)-\Phi_2(T)\|_\diamond\leq\int_0^T\|L_1(t)-L_2(t)\|_\diamond\,dt\,.$$
%In particular, 
$\|e^{L_1}-e^{L_2}\|_\diamond\leq\|L_1-L_2\|_\diamond$.
\item[(ii)] For all
%Let $L^1$ functions $L_1,L_2:[0,T]\to \mathsf L(\mathsf{PTP}(n))$ be given and consider $\dot\Phi_j(t)=L_j(t)\Phi(t)$ for arbitrary but fixed $\Phi(0)\in\mathsf{PTP}(n)$. Then
$L_1,L_2\in\mathsf L(\mathsf{PTP}(n))$ one has 
%$$\|\Phi_1(T)-\Phi_2(T)\|_{1\to 1}\leq\int_0^T\|L_1(t)-L_2(t)\|_{1\to 1}\,dt\,.$$
%In particular, 
$\|e^{L_1}-e^{L_2}\|_{1\to 1}\leq\|L_1-L_2\|_{1\to 1}$.
\end{itemize}
\end{lemma}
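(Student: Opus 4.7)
The plan is to use the standard integral representation
\[
e^{L_1}-e^{L_2} \;=\; \int_0^1 \frac{d}{dt}\bigl(e^{tL_1}e^{(1-t)L_2}\bigr)\,dt \;=\; \int_0^1 e^{tL_1}(L_1-L_2)e^{(1-t)L_2}\,dt,
\]
which is valid because $L_1$ and $L_2$ commute with themselves (but not necessarily with each other), so each factor is differentiable in the usual way and the products combine cleanly. Taking norms under the integral, together with submultiplicativity of the diamond norm (which holds since $\|\cdot\|_\diamond$ is an operator norm on $\mathcal{L}(\mathbb{C}^{n\times n}\otimes\mathbb{C}^{n\times n})$), will yield
\[
\|e^{L_1}-e^{L_2}\|_\diamond \;\leq\; \int_0^1 \|e^{tL_1}\|_\diamond \,\|L_1-L_2\|_\diamond \,\|e^{(1-t)L_2}\|_\diamond\,dt.
\]

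For part (i), the next step is to observe that by the very definition of $\mathsf{L}(\mathsf{CPTP}(n))$ we have $e^{tL_1},\,e^{(1-t)L_2}\in\mathsf{CPTP}(n)$ for every $t\in[0,1]$, and any CPTP map has diamond norm exactly $1$ by \cite[Prop.~3.44]{Watrous18}. Substituting $1$ for both of those factors collapses the integrand to $\|L_1-L_2\|_\diamond$, from which the claim follows by integrating over $[0,1]$. Part (ii) proceeds in exactly the same way, except that the ambient norm is $\|\cdot\|_{1\to 1}$: it is submultiplicative as an induced operator norm, and any positive trace-preserving map is a contraction with respect to the trace norm (so that $\|e^{tL_j}\|_{1\to 1}=1$), which is precisely what is needed to bound the two exponential factors.

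I do not expect any real obstacle here—the entire argument is essentially the one-line ``differentiate the interpolating curve'' trick familiar from perturbation theory for matrix exponentials. The only point requiring a moment of care is to make sure that $e^{tL_j}$ is indeed CPTP (resp. PTP) for all $t\in[0,1]$ rather than just for some nearby range, but this is immediate from the definition of the Lie wedge. Thus the proof is just the two displayed estimates above, spelled out separately for the diamond-norm and $1\to 1$-norm cases.
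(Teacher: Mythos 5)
Your proposal is correct and follows essentially the same route as the paper: the interpolation identity you derive by differentiating $e^{tL_1}e^{(1-t)L_2}$ is exactly the Duhamel-type formula the paper invokes, and the remaining steps (submultiplicativity plus $\|\Phi\|_\diamond=1$ for CPTP maps, resp. $\|\Phi\|_{1\to1}=1$ for PTP maps) match the paper's argument. No gaps.
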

\begin{proof}
The key to this result is the following identity, which is a special case of Duhamel's formula \cite[Ch.~1, Thm.~5.1]{DF84}:
$
e^A-e^B=\int_0^1 e^{(1-s)B}(A-B)e^{sA}\,ds
$
for all $A,B\in\mathbb C^{m\times m}$.
Thus for any submultiplicative norm on $\mathbb C^{n\times n}$ one finds
\begin{equation}\label{eq:est_int_norm}
\|e^A-e^B\|\leq\int_0^1 \|e^{(1-s)B}\|\,\|A-B\|\,\|e^{sA}\|\,ds\,.
\end{equation}
(i):
%Evaluating Eq.~\eqref{eq:est_int_norm} for the diamond norm and $A=L_1$, $B=L_2$ shows $\|e^{L_1}-e^{L_2}\|_\diamond\leq\int_0^1 \|e^{(1-s)L_2}\|_\diamond\|L_1-L_2\|_\diamond\|e^{sL_1}\|_\diamond\,ds$.
Because $e^{(1-s)L_2},e^{sL_1}\in\mathsf{CPTP}(n)$ for all $s\in[0,1]$ (by assumption on $L_1,L_2$) their diamond norm always equals $1$
% for all $s\in[0,1]$ 
\cite[Prop.~3.44]{Watrous18}.
Thus Eq.~\eqref{eq:est_int_norm} yields $\|e^{L_1}-e^{L_2}\|_\diamond\leq\int_0^1\|L_1-L_2\|_\diamond\,ds=\|L_1-L_2\|_\diamond$, as desired.
For (ii) one pursues the same strategy together with the well-known fact that $\|\Psi\|_{1\to 1}=1$ for all $\Psi\in\mathsf{PTP}(n)$ \cite[Lemma~1]{Koss72b}.
\end{proof}

\bibliographystyle{mystyle}
\bibliography{../../../../control21vJan20.bib}
\end{document}